\documentclass[11pt]{article}
\usepackage{defs}

\title{Improved Quantum Algorithms for Subset Sum and $k$-SUM}
\author{
   Nikolai Chukhin
   \thanks{JetBrains Research. Email: \url{buyolitsez1951@gmail.com}}
   \and
   Alexander S. Kulikov
   \thanks{JetBrains Research. Email: \url{alexander.s.kulikov@gmail.com}}
   \and
   Maksim Levitskii
   \thanks{Neapolis University Pafos and JetBrains Research. Email: \url{max.levitsky.5@gmail.com}}
   \and
   Ivan Mihajlin
   \thanks{JetBrains Research. Email: \url{ivmihajlin@gmail.com}}
}
\date{}

\begin{document}

\maketitle

\begin{abstract}
	The \problemname{Subset Sum} problem asks whether, given $n$~integers and a~target, some subset of~the~integers sums to~the~target.
	Its best known worst-case running time is~$O^*(2^{n/2})$ (Horowitz and Sahni, 1974), whereas the~best quantum upper bound is~$O^*(2^{n/3})$ (Bernstein, Jeffery, Lange, and Meurer, 2013).
	The $k$-\problemabbr{SUM} problem is~a~parameterized version of~\problemname{Subset Sum} asking whether there are $k$~integers that sum
	to~the target.
	The best classical upper bound for~it~is~$\widetilde O(n^{\lceil k/2\rceil})$, whereas the best quantum running time is~$\widetilde O(n^{k/3})$ (Tani, 2009).
	For random instances, a~quantum algorithm with running time $\widetilde O(n^{\Phi_k})$ is~known, where
	\[\Phi_k=\frac{2k-\lfloor k/7\rfloor-\lfloor (k+3)/7\rfloor}{6}\]
	(Schrottenloher, 2021).

	We~present a~new quantum algorithm solving worst-case $k$-\problemabbr{SUM} in~time
	$\widetilde O(n^{\Psi_k})$, where
	\[\Psi_k = \Phi_k -\frac{[k\equiv3\bmod 7]}9-\frac{[k\equiv6\bmod 7]}{18}.\]
	The algorithm is~not only faster for all~$k$ congruent to~$3$ or~$6$ modulo~$7$, but also gives a~worst-case guarantee rather than a~guarantee restricted to~single-solution random instances.
	Combining our algorithm for $7$-\problemabbr{SUM} with the standard block reduction technique yields an~$O^*(2^{2n/7})$ quantum algorithm for \problemname{Subset Sum}, improving the previously known $O^*(2^{n/3})$ algorithm.
\end{abstract}

\clearpage
\section{\texorpdfstring{Complexity of~Subset Sum and $k$-SUM}{Complexity of~Subset Sum and k-SUM}}
\label{sec:introduction}

The \problemname{Subset Sum} problem asks whether, given $n$ integers and a~target $t$, some subset of~the~integers sums to~$t$.
Despite its elementary formulation, \problemname{Subset Sum} is~one of~Karp's original 21 \cc{NP}-complete problems~\cite{Karp72} and has been studied extensively in~exact algorithms~\cite{HS74,SS81,DDKS12,Bringmann17,RW26} and cryptography~\cite{MH78,LO85,CR88,IN96,LPS10,FMV16,JW26}.

\paragraph{Classical complexity of~Subset Sum.}
A~classical algorithm due to~Horowitz and Sahni~\cite{HS74} solves \problemname{Subset Sum} as~follows: it~partitions the~input into two halves, enumerates all subset sums of~each half, and searches for a~pair of~sums, one from each list, whose total sum equals $t$, thereby achieving $O^*(2^{n/2})$ time and space\footnote{$O^*(\cdot)$ suppresses factors that grow polynomially with the~input length, while $\widetilde O(\cdot)$ suppresses polylogarithmic factors.}.
The~best known worst-case running time improves this bound only by~a~polynomial factor, to~$O(2^{n/2}n^{-\gamma})$ for a~constant $\gamma>0.5023$~\cite{CJRS23}.
Designing an~$O^*(2^{(1 / 2 - \varepsilon) n})$ algorithm for \problemname{Subset Sum} is~a~well-known challenge.
For random instances of~\problemname{Subset Sum}, the~classical running time was reduced to~$O^*(2^{0.283n})$~\cite{BBSS20}, following the line of~work~\cite{HJ10,BCJ11}.
More recently, Li~\cite{Li25} proposed a~heuristic ternary-representation algorithm running in~time $O^*(2^{0.2400n})$.
The~space usage of~the~Horowitz--Sahni~\cite{HS74} algorithm can, however, be~reduced substantially without changing the~exponential running time.
Schroeppel and Shamir retained $O^*(2^{n/2})$ time while reducing the~space to~$O^*(2^{n/4})$~\cite{SS81}.
Nederlof and W\k{e}grzycki subsequently lowered the~space exponent to~$0.249999$~\cite{NW21}, and it~was later reduced to~$0.246$ while preserving the~same running time~\cite{BCKM24}.

\paragraph{Quantum complexity of~Subset Sum.}
For arbitrary inputs, one can partition the~input into three blocks, form one list of~subset sums for each block, and then use quantum search to~find a~pair of~entries from two lists whose sum occurs in~the~third list, yielding a~running time of~$O^*(2^{n/3})$~\cite{BJLM13}.
A~different construction attains the~same asymptotic bound~\cite{AHJKS22}.
For random instances, the~quantum running time was lowered to~$O^*(2^{0.218n})$~\cite{BBSS20}, following~\cite{BJLM13,HM18}.
Li~\cite{Li25} also proposed a~heuristic quantum algorithm running in~time $O^*(2^{0.1843n})$.

\paragraph{Classical complexity of~$k$-SUM.}
An~instance of~$k$-$\problemabbr{SUM}$ consists of~$k$ lists $A_1,\ldots,A_k$, each of~length $n$, and a~target $t$; the~task is~to~choose one entry from each list so~that their sum equals $t$.\footnote{Another common formulation gives a~single list $A$ of~$n$ integers and asks whether there exist $k$ distinct indices $i_1,\ldots,i_k$ such that $A[i_1]+\cdots+A[i_k]=t$.
For fixed~$k$, the two versions have roughly the same complexity~\cite{LWWW16}: one of~them can be~solved in~time $\widetilde{O}(n^{\alpha})$ if~and only if~the other can be~solved in~the same time.}
A~standard way to~solve \problemname{Subset Sum} is~to~reduce it~to~$k$-\problemabbr{SUM} for some value of~$k$.
The~standard block reduction partitions the~$n$ input integers into $k$ blocks and lists all subset sums within each block, producing a~$k$-$\problemabbr{SUM}$ instance whose lists have length~$O^*(2^{n/k})$.
The~meet-in-the-middle algorithm of~Horowitz and Sahni~\cite{HS74} is~precisely the~case $k=2$, and the~more elaborate algorithms in~\cite{SS81,DDKS12,NW21,BCKM24} reduce \problemname{Subset Sum} to~$k$-\problemabbr{SUM} for larger values of~$k$.

The~folklore meet-in-the-middle algorithm for $k$-\problemabbr{SUM} partitions the~$k$ lists into two groups, enumerates all sums obtained by~selecting one entry from each list in~a~group, and searches for a~pair of~complementary sums, yielding the~running time $\widetilde O(n^{\lceil k/2\rceil})$.
The~$k$-$\problemabbr{SUM}$ Conjecture~\cite{Patrascu10,AL13} asserts that this algorithm is~essentially optimal.
The conjecture is~known to~imply lower bounds for various computational problems~\cite{GO95,HB96,Patrascu10,AL13,WW13,AW14,JV16,KPP16}.
Moreover, its connection to~other fine-grained conjectures has recently been established~\cite{BKMRRS24,Lampis26}.

\paragraph{Quantum complexity of~$k$-SUM.}
For $k=2$, Ambainis~\cite{Ambainis07}, based on~the result of~\cite{BDHHMSW05}, gave an~$\widetilde O(n^{2/3})$ quantum algorithm.
For all values of~$k$, the upper bound $\widetilde O(n^{k/3})$ can be~obtained via the \problemname{Claw Finding} algorithm of~Tani~\cite{Tani09}.
For $k=4$, an~algorithm for random instances running in~time $\widetilde O(n^{6/5})$ was obtained in~\cite{BJLM13}.
A~classical foundation for this line of~work on~random instances is~Wagner's merging-tree algorithm for the~\problemname{Generalized Birthday Problem}~\cite{Wagner02}.
\emph{Random instances} of~$k$-\problemabbr{SUM} were studied in~\cite{GNS18,NS20,Schrottenloher21}.
For single-solution random instances, the~best known $k$-\problemabbr{SUM} algorithm is~due to~Schrottenloher~\cite{Schrottenloher21}\footnote{Some of~the bounds above were proven for the $k$-\problemabbr{XOR} problem, but they carry over to~$k$-\problemabbr{SUM}.}.

\begin{theorem}[\cite{Schrottenloher21}]
	For every integer $k>2$, single-solution random instances of~$k$-$\problemabbr{SUM}$ can be~solved in~quantum time $\widetilde O(n^{\Phi_k})$, where
	\[\Phi_k=\frac{2k-\lfloor k/7\rfloor-\lfloor (k+3)/7\rfloor}{6}.\]
\end{theorem}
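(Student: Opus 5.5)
The plan follows Schrottenloher's merging-tree approach: a quantum Wagner-type algorithm. I first fix the regime. A single-solution random instance has lists $A_1,\ldots,A_k$ of $n$ uniform integers modulo $M\approx n^k$, chosen so that the expected number of solutions is $\Theta(1)$. The basic primitive, which I treat as a black box, is \emph{quantum list merging}. Given two lists of sizes $2^{a}$ and $2^{b}$ whose sums are constrained to a prescribed residue modulo $2^{c}$, one can build the merged list, filtered on a further $c'$ bits, in time roughly $\max(2^{a},2^{b},2^{a+b-c'})$. One can also Grover-search over a list without writing it down, at square-root cost. Following Wagner, I arrange the $k$ lists as leaves of a binary tree. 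Each internal node enforces that the partial sums agree with a guessed value on a block of low-order bits. The root checks the remaining bits against $t$.

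The key steps, in order, are these:
\begin{enumerate}
\item Parameterize each node by the list size $n^{\ell_v}$ it produces and the number of bits $n^{c_v}$ it fixes. The randomness of the instance guarantees that each intermediate list has its expected size up to polylog factors, with high probability.
\item Nest the construction inside an outer amplitude amplification. The outer search ranges over the guessed prefix values of the constraints. For single-solution instances, only a $n^{-g}$ fraction of guesses leads to the solution, so the outer loop costs $n^{g/2}$. Each iteration rebuilds one subtree classically or quantumly, and an inner quantum walk or Grover search finds the final collision. This is how $\Phi_k$ goes below Tani's $k/3$: for $k=2,3$ it recovers $2/3$ and $1$, and for $k=4$ it gives $7/6$.
\item Write the total cost as a maximum of linear forms in $(\ell_v,c_v,g)$, subject to the constraint that the root expects $\Theta(1)$ surviving candidate. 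Then solve the resulting linear program for the optimal tree shape.
\item Exploit periodicity. The optimal tree for $k+7$ is the optimal tree for $k$ with a $7$-list gadget attached, and that gadget contributes exactly $(14-2)/6=2$ to the exponent. It therefore suffices to verify the seven base cases $k\in\{3,\ldots,9\}$ explicitly. The corrections $\lfloor k/7\rfloor$ and $\lfloor (k+3)/7\rfloor$ count, respectively, how many full $7$-gadgets fit and whether a residual $4$-gadget fits, since $k\equiv4,5,6 \bmod 7$ gains an extra $1/6$.
\end{enumerate}

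I expect the main obstacle to be step 3 together with the proof that the pattern in step 4 is genuinely optimal for the chosen family of trees. The linear program has many vertices. Balancing the outer Grover cost against inner list-building costs typically forces unbalanced trees, where $4$ lists merge into one branch and $3$ into another. My first attempt will be to take the known $k=4$ ($7/6$) and $k=7$ ($2$) constructions and show that concatenating them composes additively in the exponent. I will then check the remaining residues by direct computation of the few candidate tree shapes.
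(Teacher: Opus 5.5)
There is a genuine gap. You name the right ingredients: amplitude amplification over a guessed residue, finished by a collision search. But you never assemble them into an algorithm whose cost can be computed, so nothing in the proposal establishes $n^{\Phi_k}$ for any $k$. The values $7/6$ for $k=4$ and $2$ for $k=7$ are asserted, the linear program of step~3 is never written down, and the additivity in step~4 is postulated. For merging trees that additivity is not automatic: hanging a $7$-list subtree off a new root still needs a root join that matches both subtrees on all remaining bits, so the costs do not simply multiply.

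Step~2 as written (each iteration rebuilds one subtree) also puts list construction inside the outer loop, and this already fails for $k=4$. Rebuilding a list of size $n$ in each of the $n^{1/2}$ iterations of a search over residues modulo $n$ costs $n^{3/2}$, which is worse than Tani's $n^{4/3}$. The bound $n^{7/6}$ comes from building nothing per guess:
\begin{itemize}
\item index $A_2$ and $A_4$ by residue modulo $\approx n$ once;
\item Grover-search the residue $u$ of $A_1[i]+A_2[j]$;
\item for each $u$, run claw finding in time $n^{2/3}$ on the implicit functions $i\mapsto A_1[i]+A_2[\beta_u(i)]$ and $i'\mapsto -A_3[i']-A_4[\gamma_u(i')]$, where $\beta_u(i)$ and $\gamma_u(i')$ come from table lookup.
\end{itemize}
The total is $n+n^{1/2}\cdot n^{2/3}=n^{7/6}$. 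The obstacle you anticipate, optimality of the pattern, is beside the point: an upper bound needs only one good parameter choice per $k$.

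The missing idea is the one-level four-block version of this $k=4$ construction. Split the lists into blocks of $k_1,k_2,k_3,k_4$ lists with $k_2=k_4$, and put $\ell=k_1+k_3$. Once, outside any search, store all $n^{k_2}$ sums of block~2 and of block~4 in dictionaries keyed by residue modulo $p\approx n^{k_2}$. Run Grover over $q\in\mathbb{Z}_p$, which costs a factor $n^{k_2/2}$. For each $q$, run claw finding between $\alpha_1\mapsto\Sigma_1(\alpha_1)+z$, where $z$ is a stored block-2 sum in bucket $q-\Sigma_1(\alpha_1)$, and the analogous map on block~3; this costs $n^{\ell/3}$. The exponent is then
\[
\max\Bigl\{k_2,\ \frac{k_2}{2}+\frac{\ell}{3}\Bigr\}=\max\Bigl\{k_2,\ \frac{2k-k_2}{6}\Bigr\}=\frac{2k-\lfloor 2k/7\rfloor}{6}=\Phi_k\qquad\text{for }k_2=\Bigl\lfloor\frac{2k}{7}\Bigr\rfloor,
\]
using $7k_2\le 2k$ and Hermite's identity $\lfloor k/7\rfloor+\lfloor (k+3)/7\rfloor=\lfloor 2k/7\rfloor$. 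For $k=3$ one gets $k_2=0$, and this is plain claw finding.

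The paper only cites this theorem. Still, for $k\ge4$ the construction is exactly \Cref{alg:four-block-overview} with $r=k_2$, where the walk is trivial, so the bound even holds for worst-case instances. In this picture your periodicity is just homogeneity of the cost: $(\ell,k_2)\mapsto(\ell+3,k_2+2)$ raises both terms of the maximum by exactly $2$. The two floors encode the rounding of the real optimum $k_2=2k/7$, not counts of gadgets.
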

Here, \emph{single-solution random instances} are those for which the~expected number of~witnesses is~one; see~\cite{Schrottenloher21} for the~exact definition.

\subsection*{Our Results}
Our main result is~the quantum algorithm for the \emph{worst-case} $k$-$\problemabbr{SUM}$ problem.

\begin{restatable}{theorem}{thmmain}
\label{thm:four-block}
For every integer $k>3$, worst-case $k$-$\problemabbr{SUM}$ can be~solved in~quantum time $\widetilde O(n^{\Psi_k})$, where\footnote{$[\cdot]$ is~the Iverson bracket: for a~condition $\mathcal{E}$, $[\mathcal{E}] = 1$ if~$\mathcal{E}$ holds, and $[\mathcal{E}] = 0$ otherwise.}
\[
	\Psi_k = \underbrace{\frac{2k-\lfloor k/7\rfloor-\lfloor (k+3)/7\rfloor}{6}}_{\Phi_k}-\frac{[k\equiv3\bmod 7]}9-\frac{[k\equiv6\bmod 7]}{18}.
\] 
\end{restatable}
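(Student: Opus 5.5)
The plan is to obtain every $\Psi_k$ from a small number of \emph{base gadgets} solved in the worst case, and then combine them by the usual block composition. The base gadgets are $4$-, $6$-, $7$- and $10$-\problemabbr{SUM}, with target exponents $\Psi_4=7/6$, $\Psi_6=16/9$, $\Psi_7=2$ and $\Psi_{10}=26/9$. Every gadget uses the same template.
\begin{enumerate}
\item Pick a hash $h$ into $[M]$ that is almost linear modulo a random prime, in the style of Baran--Demaine--P\u{a}tra\c{s}cu. Then $h(x)+h(y)$ determines $h(x+y)$ up to $O(1)$ candidates.
\item Run Grover's search over the residue $r$ of a prescribed partial sum, say $a_1+a_2$ (or a larger group).
\item For a fixed $r$, the constrained lists, e.g.\ $L_{12}^r=\{(a_1,a_2): h(a_1+a_2)=r\}$, are \emph{not} built explicitly. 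Each is accessed implicitly: the index is $a_1$, and the partners $a_2$ are recovered by a lookup in a table sorted by $h$.
\item Solve the remaining problem on the constrained lists with Tani's claw-finding / Ambainis's element-distinctness walk. Deeper levels are handled by nesting Grover's search over further residues.
\end{enumerate}

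For $4$-\problemabbr{SUM} this template gives the target bound. Take $M=n$; then the implicit lists $L_{12}^r$ and $L_{34}^{t-r}$ have expected size $n$. The claw search costs $n^{2/3}$ and the outer Grover's search costs $n^{1/2}$, for a total of $\widetilde O(n^{7/6})$.

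For $7$ and $10$ lists the recipe is the same but the tree is deeper. Tentatively, I would use a Grover layer over residues of several pair- and triple-sums, with the modulus sizes chosen to balance three costs: the setup cost of each walk, its update cost, and the number of Grover iterations. I would set up the exponents as a small linear program and solve it by hand. The values $2$ and $26/9$ should come out when some lists are merged as triples rather than pairs. The $-1/9$ and $-1/18$ savings for $k\equiv 3,6\pmod 7$ come precisely from using a three-list block whose merge fits inside the walk's update budget, which Schrottenloher's random-instance tree does not exploit.

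General $k$ would then follow by composition. Write $k=7q+s$ and group the lists into $q$ super-blocks of $7$ together with one gadget for the remainder (a $10$-gadget when $s=3$, a $6$-gadget when $s=6$, and the plain $\Phi$-style gadgets otherwise). Outer Grover's search over the residues of the block sums reduces the problem to independent sub-searches. I would check that the exponent is additive: each $7$-block contributes $2=\Psi_7$, and the remainder contributes $\Psi_s$ or $\Psi_{s+7}$. Finally I would verify the identity $\Psi_k=2q+\Psi_{\text{rem}}$ against the closed formula for $\Phi_k$ minus the Iverson corrections.

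I expect the main obstacle to be the \emph{worst-case} guarantee. In adversarial instances the constrained lists $L^r$ can be badly unbalanced: some residues may be very heavy, and there may be many spurious claws. This breaks both the $O(1)$-partners assumption behind implicit access and the walk's running time. My first attempt would be a heavy/light split. Residues whose bucket is larger than $\widetilde O(n/M)$ times a threshold involve only a few elements of each list, by a counting argument, so they can be enumerated separately and solved by a cheaper direct search. For the light part I would cap each bucket and use random shifts of the target, so that the true solution lands in a light bucket with constant probability. The spurious collisions produced by the hash, of which there are $O(1)$ per lookup, are filtered by an exact check inside the walk's checking step. What remains delicate is making sure that the heavy part does not dominate at the deeper levels of the $7$- and $10$-gadgets.
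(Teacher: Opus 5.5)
\textbf{There is a genuine gap: the reduction from general $k$ to $7$-blocks does not work.} An outer Grover search over residues of the block sums modulo $M$ does not decouple the super-blocks. Each residue class of a $7$-block still contains about $n^{7}/M$ tuples, and a solution needs the \emph{exact} block sums to cancel. So after fixing residues you are left with a cross-block matching problem no easier than the original. Guessing exact block sums instead would make the outer search range over about $n^{7}$ values per block. Generic nesting gives only $\Psi_{a+b}\le\Psi_a+b/2$ (Grover over the extra lists), not $\Psi_a+\Psi_b$, and you give no mechanism for additivity. In the paper, the periodicity $\Psi_{k+7}=\Psi_k+2$ has a different source. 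There is a single algorithm with four blocks of sizes $(k_1,k_2,k_3,k_4)$ and a walk parameter $r$. Its cost exponent $\max\{r,\,(k_1+k_3)/3+(k_2+k_4-r)/2\}$ is homogeneous of degree one. Hence adding $3$ to $k_1+k_3$, $2$ to each of $k_2,k_4$, and $2$ to $r$ adds exactly $2$ to the exponent.

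\textbf{The gadgets that carry the improvement are also missing.} You do not construct $\Psi_6=16/9$ or $\Psi_{10}=26/9$; ``set up a small LP and the values should come out'' is not an argument. The missing idea is a quantum walk on $J(\mathcal X_2,m)\times J(\mathcal X_4,m)$, whose vertices are $m$-subsets, $m\approx n^r$, of the tuple sets of the two large blocks. Each vertex stores its sampled tuples in dictionaries bucketed by residue modulo a random prime $p\approx m\log^2 n$. Setup then costs $\widetilde O(n^r)$, an update costs $\widetilde O(1)$, and the marked fraction is $\Omega(n^{2r-k_2-k_4})$. Checking is a Grover search over $q\in\mathbb Z_p$ followed by claw finding on $\mathcal X_1,\mathcal X_3$, at cost $\widetilde O(n^{r/2+(k_1+k_3)/3})$. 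Blocks $(1,2,1,2)$ with $r=16/9$, and $(2,3,2,3)$ with $r=26/9$, give your two target values.

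\textbf{Your worst-case handling rests on false claims.} The heavy/light split relies on a counting claim that fails: if all entries of a list are equal, a single bucket contains everything. Likewise, your $4$-SUM estimate that $L_{12}^r$ has size about $n$ is a random-instance heuristic. The paper avoids all of this by keeping only one canonical representative per residue bucket (least sum, then least tuple). Then $F_{v,q}$ and $G_{v,q}$ are genuine functions on $\mathcal X_1$ and $\mathcal X_3$, whatever the bucket sizes. Only the two buckets containing the fixed solution's block sums matter. A union bound over the $2m$ sampled tuples shows that, with probability $1-O(1/\log n)$ over $p$, these buckets contain only tuples with exactly the right sums. Spurious collisions are harmless, because claws compare exact integers and are therefore genuine solutions.
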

This improves Tani's $\widetilde O(n^{k/3})$ algorithm for every value of~$k > 3$, and, asymptotically, $\Psi_k = \frac{2k}{7} + O(1)$.

A~simple corollary of~our $k$-\problemabbr{SUM} algorithm is~a~new upper bound for the \problemname{Subset Sum} problem, which improves the~previous worst-case quantum bound $O^*(2^{n/3})$~\cite{BJLM13,AHJKS22}.
\begin{corollary}
	The~\problemname{Subset Sum} problem can be~solved in~quantum time $O^*(2^{2n/7})$.
\end{corollary}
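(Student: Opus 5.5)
The statement is a corollary of Theorem~\ref{thm:four-block} applied with $k=7$, combined with the standard block reduction from \problemname{Subset Sum} to $k$-\problemabbr{SUM}. The plan has three steps: compute the exponent, perform the reduction, and compose the two bounds.

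First I compute $\Psi_7$. We have $\lfloor 7/7\rfloor=1$ and $\lfloor 10/7\rfloor=1$, so $\Phi_7=(14-1-1)/6=2$. Since $7\equiv 0\pmod 7$, neither Iverson correction applies, and hence $\Psi_7=2$. By Theorem~\ref{thm:four-block}, worst-case $7$-\problemabbr{SUM} with lists of length $N$ is solvable in quantum time $\widetilde O(N^2)$.

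Next comes the reduction. Given $n$ integers $a_1,\dots,a_n$ and a target $t$, I partition the indices into $7$ blocks of size at most $\lceil n/7\rceil$. For the $i$-th block, the list $A_i$ consists of all subset sums of that block, with multiplicities kept or duplicates removed; either is fine, since the theorem is worst-case. Each list has length $N\le 2^{\lceil n/7\rceil}=O(2^{n/7})$. A subset $S$ sums to $t$ if and only if there are $x_i\in A_i$, namely the sums of $S\cap B_i$, with $x_1+\dots+x_7=t$. Thus the two instances are equivalent, and a witness to the $7$-\problemabbr{SUM} instance yields a subset by recording which block-subset produced each entry.

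The main point needing care is how the lists are accessed. I would first simply materialize them classically, which takes time $O^*(2^{n/7})$ and is negligible compared with $2^{2n/7}$, and hand them to the algorithm as the usual QRAM-accessible input. Alternatively, the $j$-th entry of $A_i$ can be computed in polynomial time from the binary expansion of $j$, so query access costs only a polynomial overhead. The integers involved have $\mathrm{poly}(n)$ bits, so arithmetic on them is also polynomial. This gives total time $\widetilde O(N^2)\cdot\mathrm{poly}(n)=O^*(2^{2n/7})$, where the polylogarithmic factors in $N$ are polynomial in $n$ and are absorbed into $O^*$.
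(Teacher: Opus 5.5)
Your proposal is correct and follows the paper's route. Like the paper, you apply the standard block reduction with seven blocks to obtain a $7$-\problemabbr{SUM} instance with lists of length $O(2^{n/7})$, note that $\Psi_7=2$, and invoke Theorem~\ref{thm:four-block} to get time $O^*(2^{2n/7})$. One small detail you leave implicit (as does the paper) is that lists of unequal length should be padded to a common length by repeating an existing entry; this changes neither the set of achievable sums nor how a witness is recovered.
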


Beyond~\problemname{Subset Sum}, our $k$-\problemabbr{SUM} algorithm also improves the~quantum upper bound for a~related variant.
In~\problemname{Pigeonhole Equal Subset Sum}, one seeks two subsets with equal sums; here, the~$n$ input integers are promised to~sum to~less than $2^n-1$, so~that two such subsets exist by~the pigeonhole principle.
An~$O^*(2^{n/2})$ classical algorithm for this problem~\cite{AHJKS22} was improved by~Jin and Wu~\cite{JW24} to~$O^*(2^{0.4n})$, and subsequently by~Jin, Williams, and Zhang to~$O^*(2^{n/3})$~\cite{JWZ25}.
No~algorithm faster than $O^*(2^{n/3})$ is~known for the \problemname{Pigeonhole Equal Subset Sum} problem, even in~the quantum setting.
These works also raised the question of~whether the \emph{modular} version of~this problem can be~solved faster than $O^*(2^{0.5n})$.
We~give a~positive answer in~the quantum setting, showing that the \problemname{Pigeonhole Modular Equal Subset Sum} problem can be~solved in~quantum time $O^*(2^{0.453n})$ by~a~simple reduction to~the $k$-\problemabbr{SUM} problem and an~application of~\Cref{thm:four-block}.
Formal definitions and proofs are provided in~\Cref{thm:pigeonhole-modular-equal-sums} of~\Cref{sec:pigeonhole-modular-equal-subset-sum}.

\paragraph{Techniques and comparison with previous works.}
Schrottenloher's algorithm uses a~merging tree~\cite{Wagner02,NS20,Schrottenloher21}.
For random instances, it~recursively combines partial sums and uses modular constraints to~keep the~intermediate lists small.
At~the end, it~uses \problemname{Claw Finding} to~find a~solution.
Consequently, the~analysis of~both the~success probability and the~efficiency depends on~the~partial sums being sufficiently well distributed.
Our algorithm replaces this merging-tree paradigm with a~single decomposition into four blocks and one modular filtering step based on~a~random prime.
We~then use a~quantum walk to~reduce the~search space, quantum search to~find a~suitable residue, and, at~the very end, \problemname{Claw Finding} to~find a~solution.

Below, we~describe the main ideas underlying our algorithm in~greater detail.
We~group $k$ input lists into four consecutive blocks of~sizes approximately $\frac{3}{14}k$, $\frac{4}{14}k$, $\frac{3}{14}k$, and $\frac{4}{14}k$.
A~tuple in~block~$i$ records one choice from each list in~that block; we~write $\mathcal X_i$ for the~set of~these tuples and $\Sigma_i$ for their sums.
This transforms the~problem into an~unbalanced $4$-\problemabbr{SUM} instance: the~first and third grouped lists are smaller, whereas the~second and fourth are too large to~enumerate.

Instead of~constructing the~two larger lists explicitly, the~algorithm stores $m$-element subsets $S_2\subseteq\mathcal X_2$ and $S_4\subseteq\mathcal X_4$, where $m$ is a~carefully chosen number that is a~bit smaller than $\min \{ |\mathcal{X}_{2}|, |\mathcal{X}_{4}| \}$.
However, $m$ is chosen sufficiently small so that there is time to examine $S_2$ and $S_4$.
We~then construct a~checker for a~fixed pair of~subsets $S_2, S_4$.
Hence, using quantum search, one can find suitable subsets $S_2, S_4$ using roughly $\sqrt{\frac{|\mathcal{X}_2|\cdot|\mathcal{X}_{4}|}{m^2}}$ calls to~the~checker.

Before initiating the~quantum search, the~algorithm chooses a~random prime $p = \widetilde{O}(m)$.
Within the~checker, the~algorithm groups the~sums from~$S_2$ and~$S_4$ according to~their residues modulo~$p$ and stores one canonical representative of~every nonempty residue bucket in~the~dictionary.
Intuitively, a~suitable random prime isolates the~relevant exact sums: with constant probability, the~block-$2$ and block-$4$ sums of~a~fixed solution are canonical and can therefore be~detected by~our algorithm.

If~the~checker terminates without finding a~solution involving tuples from~$S_2$ and~$S_4$, the~entire dictionary must be~recomputed for a~new pair of~subsets $S_2'$ and~$S_4'$.
This recomputation is~costly, so~we~use a~quantum walk on~pairs $(S_2, S_4)$ instead of~the quantum search.
Informally, the advantage of~the~quantum walk over quantum search in~this setting is~that the~quantum walk changes the~state $(S_2, S_4)$ to~a~\emph{neighbouring} state $(S_2', S_4')$, allowing the~dictionary to~be~recomputed efficiently.

To~check a~quantum state $(S_2, S_4)$, we~use a~quantum search over the~possible residue $q \in \mathbb{Z}_{p}$ of~the~left-hand sum $\Sigma_1+\Sigma_2$; the~right-hand residue is~then forced to~be~$-q$ (assuming that the target is~equal to~zero).
For a~fixed $q$, we~iterate over all elements $\alpha_1 \in \mathcal{X}_{1}$ and store the~value of~$\Sigma_1(\alpha_1)+\Sigma_2(\alpha_2)$, where $\alpha_2$ is~the~canonical tuple from~$S_2$ with the~compatible residue, that is, such that $\Sigma_1(\alpha_1)+\Sigma_2(\alpha_2)\equiv q\bmod p$.
The~right collection is~indexed by~$\alpha_3\in\mathcal X_3$ and contains the~analogous values $-\Sigma_3(\alpha_3)-\Sigma_4(\alpha_4)$ obtained from~$S_4$.
We~then use the~\problemname{Claw Finding} procedure, which searches for an~equal exact value in~the~two collections.
If~no~claw is~found, the~walk moves to~a~neighbouring state and updates its data.

\section{Preliminaries}
\label{sec:preliminaries}

For positive integers $n$ and $a$, $[n]=\{1,\ldots,n\}$ and $[n]^a$ denotes the~set of~all $a$-tuples of~elements of~$[n]$.
For a~list $A$, we~write $A[i]$ for its $i$-th~entry.
For a~tuple or~vector $u$, we~write $u[i]$ for its $i$-th~coordinate.
For a~finite set $\mathcal X$, $x\in_R\mathcal X$ denotes that $x$ is~chosen uniformly at~random from~$\mathcal X$.
For a~condition $\mathcal E$, the~Iverson bracket $[\mathcal E]$ equals $1$ if~$\mathcal E$ holds and $0$ otherwise.
All logarithms are base two.

\subsection{\texorpdfstring{The Subset Sum and $k$-SUM Problems}{The Subset Sum and k-SUM Problems}}
\label{sec:k-sum-problem}

\begin{definition}
	An~instance of~\problemname{Subset Sum} consists of~a~list $A$ of~$n$ integers and a~target $t\in\mathbb Z$.
	The~task is~to~determine whether there exists a~subset $S\subseteq[n]$ such that
	\[\sum_{i\in S}A[i]=t.\]
\end{definition}

\begin{definition}
	An~instance of~$k$-$\problemabbr{SUM}$ consists of~$k$ lists $A_1,\ldots,A_k$ of~integers, each of~size $n$, and a~target $t\in\mathbb Z$.
	The~task is~to~find $(i_1,\ldots,i_k)\in[n]^k$ satisfying
	\[A_1[i_1]+\cdots+A_k[i_k]=t,\]
	or~to~report that no~such tuple exists.
\end{definition}
It~is~common to~assume in~$k$-\problemabbr{SUM} that the target $t$ is~zero, which is~achieved by~replacing every $A_1[i]$ with $A_1[i]-t$.
It~is~also common to~assume that the bit-length of~all integers in~the $k$-\problemabbr{SUM} problem is~at~most $k \log n + O(\log \log n)$, that is, that every integer has absolute value $\widetilde O(n^k)$; this can be~achieved by~the standard fingerprinting technique~\cite{ALW14}.

\subsection{Prime Numbers}
\label{sec:prime-numbers}

Given an~integer~$M\geq2$, one can generate a~uniformly random prime from~$\{M,\ldots,2M\}$ in~expected time $\log^{O(1)}M$ as~follows.
One selects a~uniformly random integer from~$\{M,\ldots,2M\}$ and tests it~for primality in~time $\log^{O(1)}M$~\cite{Lenstra02,AKS04}.
By~the prime number theorem~\cite{Hadamard96,DeLaValleePoussin96}, the expected number of~trials is~$O(\log M)$.
We~also use the~following estimate.

\begin{proposition}
\label{prop:prime-divisor}
	Let $M\geq2$ and $z\neq0$ be~integers.
	The~probability that a~uniformly random prime from~$\{M,\ldots,2M\}$ divides~$z$ is
	\[O\!\left(\frac{\log|z|}{M}\right).\]
\end{proposition}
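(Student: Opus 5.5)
The plan is a counting argument that compares the number of prime divisors of $z$ with the number of primes in the interval. A prime $p\ge M$ dividing $z$ is one of the distinct prime divisors of $|z|$ that lie in $\{M,\ldots,2M\}$. If $z$ had $r$ distinct prime divisors each at least $M$, then $|z|\ge M^r$. This gives
\[
r\le \frac{\log|z|}{\log M}.
\]
Therefore the number of \emph{bad} primes in the interval, meaning those that divide $z$, is at most $\log|z|/\log M$. This bound holds trivially when $|z|=1$, since then there are no bad primes.

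Next we need a lower bound on the number of primes in $\{M,\ldots,2M\}$. By the prime number theorem, as already invoked above, or more elementarily by Chebyshev-type bounds or Bertrand's postulate in its quantitative form, this count is $\Omega(M/\log M)$ for all $M\ge2$. For small $M$ the count is at least one by Bertrand's postulate. Any finite range of $M$ can be absorbed into the constant, because for those $M$ the quantity $M/\log M$ is bounded.

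Since the prime is uniform over this set, the probability that it divides $z$ is at most
\[
\frac{\log|z|/\log M}{\Omega(M/\log M)}=O\!\left(\frac{\log|z|}{M}\right).
\]
The $\log M$ factors cancel, which is exactly what yields the stated bound.

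No step here is a real obstacle. The only point requiring care is the lower bound $\pi(2M)-\pi(M-1)=\Omega(M/\log M)$ uniformly in $M\ge2$. This follows from the prime number theorem for large $M$ together with Bertrand's postulate for the remaining small $M$.
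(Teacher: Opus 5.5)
Your proof is correct and follows essentially the same argument as the paper. Both bound the number of primes at least $M$ dividing $z$ by $\log|z|/\log M$ via $M^d\le|z|$, lower-bound the number of primes in $\{M,\ldots,2M\}$ by $\Omega(M/\log M)$ using the prime number theorem, and take the ratio so that the $\log M$ factors cancel; your use of Bertrand's postulate for small $M$ just spells out the uniformity the paper leaves implicit.
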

\begin{proof}
	Suppose that $d$ distinct primes from~$\{M,\ldots,2M\}$ divide~$z$.
	Their product divides~$|z|$, and hence $M^d\leq|z|$, which implies $d\leq\log|z|/\log M$.
	By~the~prime number theorem~\cite{Hadamard96,DeLaValleePoussin96}, the~interval contains $\Omega(M/\log M)$ primes, and therefore the~probability in~question is~$O(\log|z|/M)$.
\end{proof}

\subsection{Quantum Primitives}

Our algorithm employs the~standard circuit model augmented with coherent read--write quantum random-access memory, denoted \cc{QRAQM}.
We~count each single- or~two-qubit gate as~one operation and assume that coherent random access to~any entry of~a~list represented using $m$ qubits can be~performed in~time polylogarithmic in~$m$.
See~\cite{NS20, Schrottenloher21} for a~formal definition.

We~use the following standard primitives in~our algorithm.
\paragraph{Coherent dictionaries.}
\begin{theorem}[{\cite[Section~6.2]{Ambainis07}}]
\label{thm:coherent-dictionary}
	Let $M=n^{O(1)}$, and let $\mathcal U$ be~a~totally ordered universe whose elements have $\operatorname{polylog}(n)$-bit encodings.
	In~the~\cc{QRAQM} model, a~dynamic set $D\subseteq\mathcal U$ of~at~most $M$ elements can be~stored in~$\widetilde O(M)$ space and constructed in~$\widetilde O(M)$ time.
	The~structure supports coherent successor lookup, insertion, and deletion in~$\widetilde O(1)$ time, up~to~inverse-polynomial error.
\end{theorem}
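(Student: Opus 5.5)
Since \Cref{thm:coherent-dictionary} is quoted from~\cite{Ambainis07}, the plan is to reconstruct Ambainis's construction and explain why it meets the stated bounds. The key requirement, beyond speed, is that the quantum memory holding $D$ is a \emph{canonical} function of the set $D$ alone. It must not depend on the order of insertions and deletions, nor on any internal randomness beyond a fixed, classically sampled seed. Otherwise superpositions over different histories that reach the same set would not interfere, which is fatal inside a quantum walk. I would therefore build a history-independent data structure whose memory contents are determined by $(D,\text{seed})$, and implement every operation as a reversible circuit of polylogarithmic length acting on that canonical representation.

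\textbf{Step 1: memory layout.} Fix, once and for all, a hash function $h$ from a $\operatorname{polylog}(n)$-wise independent family, mapping $\mathcal U$ into $M$ buckets. Each bucket gets a block of $\operatorname{polylog}(n)$ cells, with its elements stored in sorted order. The cell addresses of an element are thus determined by $D$ and $h$, with no free-list or allocator whose state would leak history. By standard concentration for limited-independence hashing, every bucket holds $O(\log n)$ elements, except with inverse-polynomial probability over the choice of $h$. This bad event is the source of the allowed inverse-polynomial error: on overflow the circuit simply acts arbitrarily.

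\textbf{Step 2: successor structure.} Hashing destroys order, so I would overlay a skip list on the sorted order of $D$. The level of each element $x$ is a deterministic function of $x$ and the seed, namely the number of leading ones of a second hash $g(x)$. The pointers at each level are then determined by $D$, and the node for $x$ (key plus its $O(\log M)$ forward pointers) is stored in $x$'s hash bucket. Successor lookup follows the standard skip-list search from the top level. Insertion and deletion locate the predecessors at each level and splice pointers, which is reversible given the key. Each search visits $O(\log M)$ nodes in expectation, and at most $\operatorname{polylog}(n)$ nodes except with inverse-polynomial probability. Truncating the circuit at that length therefore gives $\widetilde O(1)$ time per operation, each step costing $\widetilde O(1)$ through \cc{QRAQM}. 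Construction from scratch performs $M$ insertions, taking $\widetilde O(M)$ time, and the layout uses $\widetilde O(M)$ space.

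\textbf{Main obstacle.} The delicate step is the error analysis for the truncated, fixed-length circuits. These circuits act on a superposition over all sets $D$ reached during a walk, so a bad hash that affects some set $D$ in the superposition could accumulate error over many operations. I would handle this by bounding, for a fixed seed, the fraction of relevant sets on which an operation misbehaves. A union bound over the $\operatorname{poly}(n)$ operations of the overall algorithm then gives total error $\|\cdot\|$ at most inverse-polynomial, since a unitary applied to a state with small bad amplitude deviates by at most that amplitude.
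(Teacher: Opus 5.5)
Your reconstruction is correct and matches the construction the paper relies on. The paper gives no proof of its own but cites \cite[Section~6.2]{Ambainis07}, which likewise stores a skip list inside a hash table with polylogarithmic-capacity buckets, fixes all random choices in advance so that memory contents depend only on the stored set, and absorbs bucket overflow and overlong search paths into an inverse-polynomial error.

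The one step to phrase more carefully is the error analysis. Bounding the fraction of bad sets for a fixed seed is not enough, because the amplitudes over sets need not be uniform. Instead, bound the failure probability over the seed by some $\varepsilon$ for each fixed set and operand. The ideal state before each of the $T=\operatorname{poly}(n)$ operations does not depend on the seed, so its bad weight $w_t$ has expectation at most $\varepsilon$. The hybrid argument gives total deviation at most $2\sum_t\sqrt{w_t}$, whose expectation is at most $2T\sqrt{\varepsilon}$ by Jensen's inequality. Markov's inequality then gives a good seed with constant probability, once the polylogarithmic capacities make $\varepsilon$ a sufficiently small inverse polynomial.
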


\paragraph{Quantum search.}
\begin{theorem}[\cite{Grover96,BHMT00,HMW03}]
\label{thm:quantum-search}
	Let $f:[N]\to\{0,1\}$ be~a~predicate with a~bounded-error coherent evaluation algorithm of~cost $T$.
	There is~a~bounded-error quantum algorithm that finds an~$x$ satisfying $f(x)=1$, or~reports that none exists, in~time $\widetilde O(\sqrt N\,T)$.
\end{theorem}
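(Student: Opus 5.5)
The plan is to reduce Theorem~\ref{thm:quantum-search} to Grover's algorithm with an exact oracle, and to absorb the two complications, unknown number of solutions and bounded-error evaluation, into the polylogarithmic factors hidden by $\widetilde O(\cdot)$. First I would amplify the error of the coherent evaluation algorithm for $f$. Running it $O(\log N)$ times in parallel on fresh ancillas and computing the majority of the outputs into a target qubit gives a unitary $\widetilde U_f$ of cost $O(T\log N)$. By a Chernoff bound, for every basis input $x$ the state after applying $\widetilde U_f$ to $\ket{x}\ket{0}$ is within distance $\delta=1/\operatorname{poly}(N)$ of $\ket{x}\ket{f(x)}\ket{\mathrm{junk}_x}$. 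Uncomputing the garbage (run the evaluations, copy the majority bit, run them in reverse) yields an approximate phase oracle $O_f$ that is $O(\sqrt{\delta})$-close in operator norm, on every input, to the ideal reflection $\ket{x}\mapsto(-1)^{f(x)}\ket{x}$. The square root comes from the junk entanglement, which is acceptable since $\delta$ is an arbitrary inverse polynomial.

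Next I would run standard amplitude amplification~\cite{Grover96,BHMT00} with the ideal oracle replaced by $O_f$. If $f$ has $t\geq1$ solutions, then $j\approx\frac{\pi}{4}\sqrt{N/t}$ Grover iterations produce a solution with constant probability. The number $t$ is unknown, so I would use the exponential-guessing schedule of~\cite{BHMT00}. In rounds $i=0,1,\ldots,\lceil\log\sqrt N\rceil$, choose $j$ uniformly from $\{0,\ldots,\lambda^i\}$ for a constant $\lambda\in(1,4/3)$, apply $j$ iterations, measure, and verify the candidate with $O(\log N)$ further classical-style evaluations of $f$. The standard averaging lemma shows that once $\lambda^i\gtrsim\sqrt{N/t}$, each round succeeds with probability at least $1/4$. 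The total number of iterations is then $O(\sqrt{N/t})\leq O(\sqrt N)$, and repeating the whole procedure a constant number of times makes the error small. If every round fails, we report that no solution exists; this answer is correct with high probability when $t=0$, since every output is checked before it is accepted.

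The error accounting is the step I expect to need the most care. By a hybrid argument, replacing the ideal oracle by $O_f$ in a circuit with $Q=O(\sqrt N)$ oracle calls changes the final state by at most $Q\cdot O(\sqrt\delta)$. Choosing $\delta=N^{-2}$ makes this $o(1)$, at the price of only $O(\log N)$ repetitions per call. The total cost is $O(\sqrt N\cdot T\log N)=\widetilde O(\sqrt N\,T)$, which is what the statement requires. The logarithmic overhead could be removed entirely using the error-tolerant recursive amplification of~\cite{HMW03}, but that is unnecessary here because $\widetilde O$ already absorbs it.
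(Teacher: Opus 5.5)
Your argument is correct and is the standard derivation behind the results the paper cites for this statement; the paper itself gives no proof. It combines majority-vote error reduction (the $O(\log N)$ overhead is absorbed by $\widetilde O$), a hybrid argument for the approximate phase oracle, and the exponential-guessing schedule of \cite{BHMT00}, with \cite{HMW03} as the log-free alternative you mention. One small fix: the paper's $\log$ is base two, so your rounds must run up to $i=\lceil\log_\lambda\sqrt N\rceil$ rather than $\lceil\log\sqrt N\rceil$, since otherwise $\lambda^i=O(N^{(\log\lambda)/2})\ll\sqrt N$ in every round and the single-solution case is missed.
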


\paragraph{Quantum walks.}
Let $P$ be~a~reversible ergodic Markov chain on~a~finite state space, with stationary distribution $\pi$ and spectral gap $\delta$.
Fix a~marked set $\mathcal M$ and let $\mathcal D(x)$ denote the~data stored with a~state~$x$.
Let $T_{\operatorname{set}}$, $T_{\operatorname{upd}}$, and $T_{\operatorname{chk}}$ denote the~costs of~preparing the~stationary state and its data, coherently performing one transition and updating the~data, and checking the~vertex, respectively.
The~last operation maps, up~to~inverse-polynomial error,
\[\lvert x\rangle\lvert\mathcal D(x)\rangle\longmapsto(-1)^{[x\in\mathcal M]}\lvert x\rangle\lvert\mathcal D(x)\rangle,\]
and returns all ancillary registers to~$\lvert0\rangle$.

\begin{theorem}[\cite{Szegedy04,MNRS11}]
\label{thm:quantum-walk-search}
	If~$\pi(\mathcal M)\geq\mu$, then a~state in~$\mathcal M$ can be~found with bounded error in~time
	\[\widetilde O\left(T_{\operatorname{set}}+\frac1{\sqrt\mu}\left(\frac1{\sqrt\delta}T_{\operatorname{upd}}+T_{\operatorname{chk}}\right)\right).\]
\end{theorem}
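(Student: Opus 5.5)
The proof follows the Magniez--Nayak--Roland--Santha (MNRS) construction. We combine Szegedy's quantization of $P$ with amplitude amplification, and realise the reflection about the stationary state by phase estimation. Throughout, each basis state $\lvert x\rangle$ is tagged with its data register $\lvert\mathcal D(x)\rangle$, so every quantum operation also maintains the data.

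\textbf{Step 1: the states and the walk operator.} For each $x$ let $\lvert p_x\rangle=\sum_y\sqrt{P(x,y)}\lvert y\rangle\lvert\mathcal D(y)\rangle$. The stationary state is
\[\lvert\pi\rangle=\sum_x\sqrt{\pi(x)}\lvert x\rangle\lvert\mathcal D(x)\rangle\lvert p_x\rangle.\]
It can be prepared in time $T_{\operatorname{set}}$; by the definition of $T_{\operatorname{set}}$ this includes the neighbour register, or else we add one update. The map $\lvert x\rangle\lvert\mathcal D(x)\rangle\lvert0\rangle\mapsto\lvert x\rangle\lvert\mathcal D(x)\rangle\lvert p_x\rangle$ and its reverse-direction analogue cost $O(T_{\operatorname{upd}})$.

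Conjugating the reflection about $\lvert0\rangle$ in one register by these maps gives reflections about the spaces $\mathcal A=\operatorname{span}\{\lvert x\rangle\lvert p_x\rangle\}$ and $\mathcal B=\operatorname{span}\{\lvert p^*_y\rangle\lvert y\rangle\}$. Let $W(P)$ be their product. Because $P$ is reversible, Szegedy's spectral lemma applies. On $\mathcal A+\mathcal B$, the vector $\lvert\pi\rangle$ is the unique $1$-eigenvector of $W(P)$. Every other eigenvalue is $e^{\pm2i\theta}$, where $\cos\theta$ is an eigenvalue of the discriminant, and the discriminant has the same spectrum as $P$. Hence the phase gap satisfies $\Delta\geq2\arccos(1-\delta)=\Omega(\sqrt\delta)$.

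\textbf{Step 2: approximate reflection about $\lvert\pi\rangle$.} We run phase estimation on $W(P)$ with precision $\Theta(\sqrt\delta)$, flip the sign of every state whose estimated phase is nonzero, and uncompute. This uses $O(1/\sqrt\delta)$ applications of $W(P)$, so it costs $O(T_{\operatorname{upd}}/\sqrt\delta)$. It fixes $\lvert\pi\rangle$ exactly and acts as $-I$ on the orthogonal complement up to constant error. Following MNRS, we repeat the estimation $O(\log(1/\varepsilon))$ times and take the majority/AND. This brings the error down to $\varepsilon$, which costs only logarithmic overhead, absorbed into the $\widetilde O$.

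\textbf{Step 3: amplitude amplification.} Write $\lvert\pi\rangle=\sin\phi\,\lvert\pi_{\mathcal M}\rangle+\cos\phi\,\lvert\pi_{\overline{\mathcal M}}\rangle$, where $\sin^2\phi=\pi(\mathcal M)\geq\mu$. We start from $\lvert\pi\rangle$ and alternate two reflections. The first is the reflection about $\lvert\pi_{\overline{\mathcal M}}\rangle$-type states, implemented by the checking operation at cost $T_{\operatorname{chk}}$. The second is the approximate reflection of Step 2.

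By the two-dimensional rotation argument, after $O(1/\sqrt\mu)$ rounds the state has constant overlap with the marked part. Measuring then yields $x\in\mathcal M$ with constant probability, which we verify and repeat if necessary. Since $\mu$ is only a lower bound, the exact angle is unknown; we handle this with the standard randomised choice of the iteration count (as in~\cite{BHMT00}) or with exponential guessing, losing only a constant factor.

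\textbf{Step 4: error accounting.} We set $\varepsilon$ to a sufficiently small inverse polynomial, so that the accumulated errors over all $O(1/\sqrt\mu)$ calls are $o(1)$. These errors come from the $O(1/\sqrt\mu)$ imperfect reflections and from the inverse-polynomial errors of the data-structure operations in $T_{\operatorname{upd}}$ and $T_{\operatorname{chk}}$. The total error then stays bounded, and the running time is
\[\widetilde O\left(T_{\operatorname{set}}+\tfrac1{\sqrt\mu}\left(\tfrac1{\sqrt\delta}T_{\operatorname{upd}}+T_{\operatorname{chk}}\right)\right).\]

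I expect the main obstacle to be Step 2. Showing that the phase-estimation reflection is close enough to the ideal reflection about $\lvert\pi\rangle$ requires Szegedy's spectral correspondence together with the MNRS error reduction. Everything else is the standard amplitude-amplification bookkeeping.
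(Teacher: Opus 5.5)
Your outline is correct and follows the Magniez--Nayak--Roland--Santha argument: the Szegedy walk on $\mathcal A+\mathcal B$, a phase-estimation-based approximate reflection about $\lvert\pi\rangle$, and amplitude amplification against the checking reflection. The paper does not reprove this result but cites \cite{Szegedy04,MNRS11}, so this is the same route. Two points need more careful phrasing. First, the approximate reflection acts as $-I$ only on the orthogonal complement of $\lvert\pi\rangle$ inside $\mathcal A+\mathcal B$ (on $(\mathcal A+\mathcal B)^\perp$ the walk is the identity); this suffices because $\lvert\pi_{\mathcal M}\rangle,\lvert\pi_{\overline{\mathcal M}}\rangle\in\mathcal A$ and the checking flip preserves $\mathcal A$. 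Second, your bound $\Delta\geq 2\arccos(1-\delta)$ requires $\delta$ to be the absolute gap $1-\max_{j\ge2}\lvert\lambda_j\rvert$, as in MNRS.
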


\paragraph{Quantum walks on~Johnson graphs.}
For a~finite set $\mathcal X$ and $1\leq m\leq|\mathcal X|/2$, the~Johnson graph $J(\mathcal X,m)$ has the~$m$-subsets of~$\mathcal X$ as~its vertices, with two subsets adjacent when their intersection has size $m - 1$.
Our algorithm uses the~following lazy walk on~$J(\mathcal X,m)\times J(\mathcal X,m)$.
At~each step, the~walk successively samples one of~the~two component subsets, a~member of~that subset, and an~element of~$\mathcal X$ from their respective uniform distributions.
It~replaces the~sampled member by~the~sampled element unless the~latter already belongs to~the~subset, in~which case it~stays in~the same vertex.
This walk is~reversible and has the~uniform stationary distribution.
Kachigar and Tillich~\cite{KT17} show that its spectral gap is~$\delta=\Omega(1/m)$.
Consequently, if~at~least a~$\mu$ fraction of~its vertices are marked, \Cref{thm:quantum-walk-search} specializes to~the~running time
\[\widetilde O\left(T_{\operatorname{set}}+\frac1{\sqrt\mu}\left(\sqrt m\,T_{\operatorname{upd}}+T_{\operatorname{chk}}\right)\right).\]

\paragraph{Claw Finding.}
\begin{theorem}[\cite{Tani09}]
\label{thm:claw-finding}
	Let $\mathcal L$, $\mathcal R$, and $\mathcal Y$ be~finite sets with $|\mathcal Y|=\operatorname{poly}(|\mathcal{L}|, |\mathcal{R}|)$, and let $f:\mathcal L\to\mathcal Y$ and $g:\mathcal R\to\mathcal Y$.
	A~\emph{claw} is~a~pair $(x,y)\in\mathcal L\times\mathcal R$ satisfying $f(x)=g(y)$.
	Suppose that $|\mathcal L|\leq|\mathcal R|^2$, $|\mathcal R|\leq|\mathcal L|^2$, and each function can be~evaluated coherently in~$\log^{O(1)}(|\mathcal L|+|\mathcal R|)$ time.
	Then there is~a~bounded-error quantum algorithm that finds a~claw, or~reports that none exists, in~time
	\[\widetilde O\left(\left(|\mathcal L|\cdot|\mathcal R|\right)^{1/3}\right).\]
\end{theorem}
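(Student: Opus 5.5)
We would prove \Cref{thm:claw-finding} by a quantum walk on a product of two Johnson graphs, in the spirit of Ambainis's element distinctness algorithm. Write $N=|\mathcal L|$ and $M=|\mathcal R|$, and assume without loss of generality that $N\le M$. Set $r=\min\{\lfloor (NM)^{1/3}\rfloor,\lfloor N/2\rfloor\}$. The states of the walk are pairs $(S_L,S_R)$ with $S_L\in J(\mathcal L,r)$ and $S_R\in J(\mathcal R,r)$. This needs $r\le M/2$, which holds because $N\le M$. We use the lazy product walk described in the preliminaries. Its two components now live over different ground sets, but the factor-wise argument of~\cite{KT17} still gives spectral gap $\delta=\Omega(1/r)$. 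A state is marked if some $x\in S_L$ and $y\in S_R$ satisfy $f(x)=g(y)$.

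\textbf{Data and costs.} The data $\mathcal D(S_L,S_R)$ consists of two parts. The first is a coherent dictionary (\Cref{thm:coherent-dictionary}) holding the tuples $(f(x),0,x)$ for $x\in S_L$ and $(g(y),1,y)$ for $y\in S_R$. The second is a counter $c$ of the number of claws in $S_L\times S_R$. To keep $c$ updatable, we also keep, for each value $v$ that occurs, multiplicity records $(v,\#_L(v),\#_R(v))$ in a second dictionary. Since $|\mathcal Y|=\operatorname{poly}(N,M)$, all keys have $O(\log(N+M))$-bit encodings.

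Setup evaluates $2r$ function values and inserts them, costing $T_{\operatorname{set}}=\widetilde O(r)$. One update removes one element and inserts another. For each of these, we look up the multiplicity of its value on the opposite side and adjust $c$ by that amount, so $T_{\operatorname{upd}}=\widetilde O(1)$. Checking only tests whether $c>0$, so $T_{\operatorname{chk}}=\widetilde O(1)$.

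\textbf{Marked fraction and running time.} If a claw $(x^*,y^*)$ exists, a uniformly random state contains it with probability $\frac rN\cdot\frac rM$, so $\mu\ge r^2/(NM)$. Plugging into the Johnson-graph specialization of \Cref{thm:quantum-walk-search} gives
\[
\widetilde O\left(r+\frac{\sqrt{NM}}{r}\left(\sqrt r+1\right)\right)=\widetilde O\left(r+\frac{\sqrt{NM}}{\sqrt r}\right).
\]
With $r=(NM)^{1/3}$ this is $\widetilde O((NM)^{1/3})$. The hypothesis $M\le N^2$ is exactly what guarantees $(NM)^{1/3}\le N$, so this choice of $r$ is admissible up to the factor $1/2$. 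If instead $r=\lfloor N/2\rfloor$, the bound is still $\widetilde O(N+\sqrt{M})\le\widetilde O((NM)^{1/3})$ under $N\le M\le N^2$. (The symmetric hypothesis $N\le M^2$ covers the case $M<N$.) Once a marked state is reached, the claw is read off by searching the dictionary. If the walk finds nothing, we report that no claw exists; bounded error is inherited from \Cref{thm:quantum-walk-search}.

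\textbf{Main obstacle.} The delicate point is that the quantum walk framework needs the data $\mathcal D(x)$ to be a fixed function of the state $x$. The encoding of the dictionaries must therefore be history-independent: it cannot depend on the order of insertions, or interference between different paths to the same state breaks. We would meet this by using Ambainis's skip-list/hash-based structure from \Cref{thm:coherent-dictionary}, which has a canonical representation up to inverse-polynomial error. We would then check that these errors, accumulated over the $\widetilde O((NM)^{1/3})$ updates, stay bounded.
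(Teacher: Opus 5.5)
Your proposal is correct and is essentially the argument behind the cited result. The paper does not prove \Cref{thm:claw-finding} itself but imports it from Tani, whose algorithm is likewise a quantum walk on a product of two Johnson graphs over $\mathcal L$ and $\mathcal R$ that stores the function values of the current subsets, with the subset size balanced as in your estimate $r+\sqrt{NM/r}$.

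The only slip is in the branch $r=\lfloor N/2\rfloor$. There, $N=O((NM)^{1/3})$ follows from the branch condition $\lfloor (NM)^{1/3}\rfloor>\lfloor N/2\rfloor$, not from $N\le M\le N^2$; the latter on its own gives the reverse inequality $N\ge (NM)^{1/3}$. The hypothesis $M\le N^2$ is what you need for the other term, $\sqrt M\le (NM)^{1/3}$.
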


\section{\texorpdfstring{The $k$-SUM Algorithm}{The k-SUM Algorithm}}
\label{sec:k-sum}

We~are now ready to~prove~\Cref{thm:four-block}.

\thmmain*

\subsection{Partition}
Let $k_1,k_2,k_3,k_4$ be~positive integer parameters satisfying $k_1+k_2+k_3+k_4=k$; their values will be~chosen at~the end of~the~proof.
Group the~$k$ input lists into four consecutive blocks of~sizes $(k_1,k_2,k_3,k_4)$, numbered $1$, $2$, $3$, and $4$.
Thus, block~$1$ consists of~the~first $k_1$ input lists, and blocks $2$, $3$, and $4$ consist of~the~next $k_2$, $k_3$, and $k_4$ input lists, respectively.

The~index tuples of~the~four blocks range over the~product domains
\[\mathcal X_i=[n]^{k_i}\qquad\text{for every }i\in[4],\]
and we~write $\alpha_i\in\mathcal X_i$ for an~index tuple of~block $i$.
For each $i\in[4]$, let $\Sigma_i:\mathcal X_i\to\mathbb Z$ map a~tuple of~block $i$ to~its sum.
For example, if~$\alpha_1\in\mathcal X_1$, then $\Sigma_1(\alpha_1)=\sum_{j \in [k_1]} A_j[\alpha_1[j]]$.

\subsection{Construction}

\paragraph{Searching in~a~random subspace.}
The~straightforward approach would construct complete lists of~sums for the~blocks used in~the~search.
Instead, we~retain only a~small sample of~tuples from some of~the blocks and use a~quantum walk to~find the~appropriate subset of~tuples.

Let $r\in(0,\min\{k_2,k_4\}]$ be~a~parameter, whose value is~fixed at~the end of~the~proof, and set
\[m=\lfloor n^r/2\rfloor.\]
We~perform a~quantum walk on~the~graph
\[\mathcal G=J(\mathcal X_2,m)\times J(\mathcal X_4,m),\]
whose vertices are of~the~form
\[v=(S_2,S_4)\in\binom{\mathcal X_2}{m}\times\binom{\mathcal X_4}{m}.\]
We~will choose $k_2=k_4$, so~$|\mathcal X_2|=|\mathcal X_4|$ and $\mathcal G$ is~isomorphic to~$J(\mathcal X,m)\times J(\mathcal X,m)$ for a~set $\mathcal X$ of~that cardinality.
Each $\mathcal X_i$ is~the~set of~all index tuples for block~$i$; for $i\in\{2,4\}$, $S_i\subseteq\mathcal X_i$ is~the~$m$-element sample stored at~the~vertex, and $x\in S_i$ denotes a~single sampled tuple.
One step of~the walk chooses $i\in_R\{2,4\}$, $x\in_R S_i$, and $x'\in_R\mathcal X_i$.
If~$x'\notin S_i$, the~step replaces $x$ by~$x'$ in~that coordinate; otherwise it~remains at~$v$.

Fix a~particular $k$-$\problemabbr{SUM}$ solution and denote its four tuples of~selected indices, one for each block, by~$\alpha^*=(\alpha_1^*,\alpha_2^*,\alpha_3^*,\alpha_4^*)$.
For $i\in\{2,4\}$, a~uniformly random $m$-subset of~$\mathcal X_i$ contains a~fixed tuple with probability $m/n^{k_i}$.
Consequently, the~fraction of~vertices $v=(S_2,S_4)$ with $\alpha_2^*\in S_2$ and $\alpha_4^*\in S_4$ is
\[\Pr_{v = (S_2, S_4)\in_R V(\mathcal G)}\!\left[\alpha_2^*\in S_2\text{ and }\alpha_4^*\in S_4\right]=\frac{m^2}{n^{k_2+k_4}}.\]

\paragraph{Filtering by~a~random prime.}
Let $\mathcal P$ be~the~set of~primes in~$\{m\log^2 n, \dotsc, 2m\log^2 n\}$.
We~sample $p\in_R\mathcal P$ and keep it~fixed throughout the~walk.
For every fixed choice of~$p$, the~functions defined below determine a~fixed marked subset of~$V(\mathcal G)$.
When checking a~vertex, we~use quantum search over residues $q\in\mathbb Z_p$.
Once $q$ is~fixed, we~retain only the~left- and right-hand tuple combinations whose partial sums satisfy $\Sigma_1(\alpha_1)+\Sigma_2(\alpha_2)\equiv q\bmod p$ and $\Sigma_3(\alpha_3)+\Sigma_4(\alpha_4)\equiv-q\bmod p$, respectively.

\paragraph{Data stored at~a~vertex.}
At a~vertex $v = (S_2, S_4)$, the walk stores, for each $i \in \{ 2, 4 \}$, a~dictionary $D_{v, i}$ whose bucket indices are residues $s \in \mathbb{Z}_{p}$.
The bucket indexed by~$s$ is
\[D_{v,i}[s]=\{ (\Sigma_i(x), x)  \colon x \in S_i \text{ and } \Sigma_i(x)\equiv s\bmod p \}.\]
By~\Cref{thm:coherent-dictionary}, those dictionaries can be~built in~$\widetilde O(m)$ time, and each of~coherent lookup, insertion, and deletion costs $\widetilde O(1)$ time.
The~setup procedure prepares both dictionaries as~part of~the~walk state.
If~a~transition replaces $x$ by~$x'$ in~coordinate~$i$, its update coherently transforms the~vertex data from $(D_{v,2},D_{v,4})$ to~$(D_{v',2},D_{v',4})$ by~deleting $(\Sigma_i(x)\bmod p,(\Sigma_i(x),x))$ and inserting $(\Sigma_i(x')\bmod p,(\Sigma_i(x'),x'))$.

Define the~lookup function $R_{v,i}:\mathbb Z_p\to(\mathbb Z\times\mathcal X_i)\cup\{\bot\}$, where
\[R_{v,i}(s)=\begin{cases}\min D_{v,i}[s],&D_{v,i}[s]\neq\varnothing,\\ \bot,&D_{v,i}[s]=\varnothing.\end{cases}\]
Here, the~minimum is~taken lexicographically, first by~sum and then by~tuple.
If~$R_{v,i}(s)=(z,x)$, we~call $(z,x)$ the~\emph{canonical representative} of~the bucket indexed by~$s$.
Later, we~show that, with high probability, every entry in~the bucket containing a~solution has the same sum, so~the choice of~representative is~immaterial.

\paragraph{Checking a~vertex.}
Let $\alpha_1\in\mathcal X_1$ and $\alpha_3\in\mathcal X_3$.
Fix $q\in\mathbb Z_p$ and define the~two target bucket indices by
\[s_2^q(\alpha_1)=\bigl(q-\Sigma_1(\alpha_1)\bigr)\bmod p,\qquad s_4^q(\alpha_3)=\bigl(-q-\Sigma_3(\alpha_3)\bigr)\bmod p.\]
Thus, for $i\in\{2,4\}$, $s_i^q$ is~the~bucket index required for the~block-$i$ sum.
Define $F_{v,q}$ and $G_{v,q}$ by
\[F_{v,q}(\alpha_1)=\begin{cases}\Sigma_1(\alpha_1)+z,&R_{v,2}(s_2^q(\alpha_1))=(z,\alpha_2),\\ \bot,&R_{v,2}(s_2^q(\alpha_1))=\bot,\end{cases}\]
and
\[G_{v,q}(\alpha_3)=\begin{cases}-\Sigma_3(\alpha_3)-z,&R_{v,4}(s_4^q(\alpha_3))=(z,\alpha_4),\\ \top,&R_{v,4}(s_4^q(\alpha_3))=\bot.\end{cases}\]
Here, $\bot$ and $\top$ are two distinct symbols outside $\mathbb Z$.
If~$F_{v,q}(\alpha_1)=G_{v,q}(\alpha_3)$, then the~corresponding representatives satisfy $\Sigma_1(\alpha_1)+\Sigma_2(\alpha_2)=-\Sigma_3(\alpha_3)-\Sigma_4(\alpha_4)$, so~the~four tuples form a~solution.
Define the~marked set
\[\mathcal M_p=\{v\in V(\mathcal G):\exists q\in\mathbb Z_p,\ \alpha_1\in\mathcal X_1,\ \alpha_3\in\mathcal X_3\text{ such that }F_{v,q}(\alpha_1)=G_{v,q}(\alpha_3)\}.\]
The~checking procedure coherently evaluates the~predicate $v\in\mathcal M_p$ by~quantum search over~$q$ and \problemname{Claw Finding} on~$F_{v,q}$ and $G_{v,q}$, and then applies the~transformation
\[\lvert v\rangle\lvert D_{v,2},D_{v,4}\rangle\longmapsto(-1)^{[v\in\mathcal M_p]}\lvert v\rangle\lvert D_{v,2},D_{v,4}\rangle.\]
The~pseudocode is~presented in~\Cref{alg:four-block-overview}.

\algdef{SE}[WALK]{WalkBody}{EndWalkBody}{}{}
\algtext*{WalkBody}
\algtext*{EndWalkBody}

\begin{algorithm}[H]
\caption{Quantum $k$-\problemabbr{SUM} Algorithm}
\label{alg:four-block-overview}
\small
\begin{algorithmic}[1]
	\Statex \textbf{Input:} Integer lists $A_1,\ldots,A_k$, block sizes $k_1,k_2,k_3,k_4$, and a~parameter $r\in(0,\min\{k_2,k_4\}]$.
	\Statex \textbf{Output:} A~tuple of~indices $(i_1,\ldots,i_k)\in[n]^k$ satisfying $\sum_{j=1}^k A_j[i_j]=0$, if~one exists, and \textsc{no} otherwise.
	\State Partition the~lists into four blocks of~sizes $(k_1,k_2,k_3,k_4)$
	\State Set $m\gets\lfloor n^r/2\rfloor$ and choose a~prime $p\in_R\mathcal P$
	\State Run a~quantum walk over $\mathcal G$:
	\WalkBody
		\State Quantum search over $q \in \mathbb{Z}_{p}$
		\State Construct $F_{v, q}$ and $G_{v, q}$ and run \problemname{Claw Finding} on~them
		\If{a~claw is~found}
			\State \Return the~indices corresponding to~the~found claw
		\EndIf
	\EndWalkBody
	\State \Return \textsc{no}
\end{algorithmic}
\end{algorithm}

\subsection{Correctness}
The~algorithm cannot output an~incorrect solution; so~it~remains to~show that, whenever a~solution exists, the~algorithm finds one with constant probability.
Fix a~solution $\alpha^*=(\alpha_1^*,\alpha_2^*,\alpha_3^*,\alpha_4^*)$, and let $z_2^*=\Sigma_2(\alpha_2^*)$ and $z_4^*=\Sigma_4(\alpha_4^*)$.
We~say that $z_2^*$ is~\emph{exposed} at~a~vertex if~the~canonical representative of~its residue has exact sum $z_2^*$, and define exposure of~$z_4^*$ analogously.

Let $\mathcal V^*$ be~the~set of~vertices whose two sampled subsets contain $\alpha_2^*$ and~$\alpha_4^*$.
For a~fixed prime $p$, call a~vertex $v\in\mathcal V^*$ \emph{successful} if~both target sums $z_2^*$ and~$z_4^*$ are exposed at~$v$.
Call $p$ \emph{good} if~at~least half of~the~vertices in~$\mathcal V^*$ are successful.

\begin{lemma}
\label{lem:good-prime}
	A~prime $p\in_R\mathcal P$ is~good with constant probability.
\end{lemma}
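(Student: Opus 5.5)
We bound the expected fraction of unsuccessful vertices in $\mathcal V^*$ over the random choice of $p$, and then apply Markov's inequality.

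\textbf{Step 1: reduce exposure to collisions.} Fix $v=(S_2,S_4)\in\mathcal V^*$. Suppose $z_2^*$ is not exposed at $v$. Since $\alpha_2^*\in S_2$, the bucket $D_{v,2}[z_2^*\bmod p]$ is nonempty. Its canonical representative has sum $z\neq z_2^*$, so there is some $x\in S_2$ with $\Sigma_2(x)\neq z_2^*$ and $p\mid \Sigma_2(x)-z_2^*$. The same holds for block~$4$. Hence
\[
\Pr_p[v\text{ not successful}]\le \sum_{i\in\{2,4\}}\ \sum_{\substack{x\in S_i\\ \Sigma_i(x)\neq z_i^*}}\Pr_p\bigl[p\mid \Sigma_i(x)-z_i^*\bigr].
\]

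\textbf{Step 2: bound each term.} Each difference $\Sigma_i(x)-z_i^*$ is a nonzero integer of absolute value $\widetilde O(n^k)=n^{O(1)}$, by the fingerprinting assumption on bit-lengths. Apply \Cref{prop:prime-divisor} with $M=m\log^2 n$. Each term is then
\[
O\!\left(\frac{\log n}{m\log^2 n}\right)=O\!\left(\frac{1}{m\log n}\right).
\]
Summing over at most $2m$ elements gives $\Pr_p[v\text{ not successful}]=O(1/\log n)$. This is where the $\log^2 n$ factor in the range of $\mathcal P$ is used: it absorbs the $\log|z|=O(\log n)$ from the proposition with room to spare. The union bound is taken only over elements of $S_i$, never over all of $\mathcal X_i$, which matters because $|\mathcal X_i|\gg m$.

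\textbf{Step 3: average and apply Markov.} Let $U_p$ be the fraction of unsuccessful vertices in $\mathcal V^*$. By linearity, $\mathbb E_p[U_p]=O(1/\log n)$. Markov's inequality gives $\Pr[U_p>1/2]=O(1/\log n)$. So $p$ is good with probability $1-o(1)$, and in particular with constant probability.

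\textbf{Main obstacle.} Everything hinges on the per-vertex union bound in Steps~1--2: one must check that a non-exposed target really forces another element of the same bucket to have a different exact sum congruent to the target modulo $p$. One also needs $m\ge 1$ and $M\ge 2$ so that \Cref{prop:prime-divisor} applies; this holds for large $n$ since $r>0$. Equal sums from different tuples cause no harm, because exposure refers only to sums.
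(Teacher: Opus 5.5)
Your proposal is correct and follows essentially the same route as the paper. Both arguments fix a vertex of $\mathcal V^*$, apply \Cref{prop:prime-divisor} with $M=m\log^2 n$ to each of the at most $2m$ sampled tuples to get a per-vertex failure probability of $O(1/\log n)$, and then use linearity of expectation and Markov's inequality. Your Step~1 spells out slightly more explicitly than the paper that the non-exposing canonical representative is itself the colliding tuple.
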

\begin{proof}
	Fix a~vertex $v\in\mathcal V^*$.
	For $i\in\{2,4\}$, an~entry $x\in S_i$ with $\Sigma_i(x)\neq z_i^*$ can lie in~the~same residue bucket as~$z_i^*$ only if~$p$ divides $\Sigma_i(x)-z_i^*$.
	The~difference $\Sigma_i(x)-z_i^*$ is~nonzero and has absolute value $\widetilde O(n^k)$, therefore applying \Cref{prop:prime-divisor} bounds the~collision probability by
	\[
	O\left(\frac{\log \widetilde{O}(n^{k}) }{m \log^2 n}\right) = O\left(\frac{k}{m \log n}\right) = O\left(\frac{1}{m \log n}\right),
	\] 
	where the~last equality uses that $k$ is~fixed.
	A~union bound over at~most $2m$ tuples in~the~two sampled subsets shows that, except with probability $O(1/\log n)$, both relevant buckets contain only entries of~exact sums $z_2^*$ and $z_4^*$, respectively; in~particular, $v$ is~successful.
	So, the~expected fraction of~unsuccessful vertices of~$\mathcal V^*$ is~$O(1/\log n)$.
	Markov's inequality now implies that more than half of~the~vertices of~$\mathcal V^*$ are unsuccessful with probability $O(1/\log n)$.
\end{proof}

Every successful vertex belongs to~$\mathcal M_p$, hence if~$p$ is~good, at~least half of~$\mathcal V^*$ is~contained in~$\mathcal M_p$, and then
\begin{equation}\label{eq:marked-fraction}\frac{|\mathcal M_p|}{|V(\mathcal G)|}\geq\frac12\frac{|\mathcal V^*|}{|V(\mathcal G)|}=\frac12\frac{m^2}{n^{k_2+k_4}}=\Omega(n^{2r-k_2-k_4}).\end{equation}

By~\Cref{lem:good-prime}, the~bound in~\Cref{eq:marked-fraction} holds with constant probability over the~choice of~$p$.
The~residue search and the claw-finding procedure are amplified to~error $n^{-\omega(1)}$, as~required by~\Cref{thm:quantum-walk-search}.
Thus the~algorithm finds a~solution with constant probability whenever one exists.

\subsection{Running Time}
By~\Cref{lem:good-prime,eq:marked-fraction}, with constant probability over $p\in_R\mathcal P$, the~quantum walk has a~marked fraction $\mu=\Omega(n^{2r-k_2-k_4})$.
The~quantum search over $q\in\mathbb Z_p$ contributes a~factor $\widetilde O(\sqrt p)$ to~the~checking time by~\Cref{thm:quantum-search}.
For each fixed residue, \problemname{Claw Finding} searches the~domains $\mathcal X_1$ and $\mathcal X_3$.
For the~parameter choices made below, we~force $k_1/2\le k_3\le 2k_1$, so~\Cref{thm:claw-finding} gives a~claw finding of~a~fixed residue in~time $\widetilde O(n^{(k_1+k_3)/3})$.
By~\Cref{thm:coherent-dictionary}, the~walk setup builds the~two dictionaries in~$\widetilde O(m)=\widetilde O(n^r)$ time.
Since adjacent vertices differ by~one tuple, a~walk update maintains them using one deletion and one insertion in~$\widetilde O(1)$ time.
Using~\Cref{thm:quantum-walk-search} on~$\mathcal{G}$ with parameters
\[\begin{aligned}T_{\operatorname{set}}&=\widetilde O(n^r),\qquad&T_{\operatorname{upd}}&=\widetilde O(1),\\T_{\operatorname{chk}}&=\widetilde O(\sqrt p\,n^{(k_1+k_3)/3})=\widetilde O(n^{(k_1+k_3)/3+r/2}),\qquad&\mu&=\Omega(n^{2r-k_2-k_4})\end{aligned}\]
gives a~total running time of~the algorithm
\[\widetilde O\left(n^r+n^{(k_2+k_4)/2-r}\left(n^{r/2}+n^{(k_1+k_3)/3+r/2}\right)\right)=\widetilde O\left(n^{\max\left\{r,\ \frac{k_1+k_3}{3}+\frac{k_2+k_4-r}{2}\right\}}\right).\]

It~remains to~choose the~block sizes and~$r$.
The~bound depends on~$k_2$ and $k_4$ only through their sum, except for the~constraint $r\le\min\{k_2,k_4\}$; hence we~choose $k_2=k_4$.
Similarly, it~depends on~$k_1$ and $k_3$ only through their sum, and choosing them as~evenly as~possible ensures the~conditions of~\Cref{thm:claw-finding}.
For convenience in~\Cref{tab:params}, put $\ell=k_1+k_3$; then $k_1=\lfloor\ell/2\rfloor$, $k_3=\lceil\ell/2\rceil$, and $k=\ell+2k_2$.
With these choices, put
\[r=\min\left\{k_2,\frac{2}{3}\left(k_2+\frac{k_1+k_3}{3}\right)\right\}.\]

Let $k=7d+j$, where $j\in\{0,1,\ldots,6\}$.

\begin{table}[H]
\centering
\begin{tabular}{ccccc}
\toprule
$j = k \bmod 7$ & $\ell = k_1 + k_3$ & $k_2=k_4$ & $r$ & $\Psi_k = \max \{ r, \ell / 3 + k_2 - r / 2 \}$\\
\midrule
$0$ & $3d$ & $2d$ & $2d$ & $2d$\\
$1$ & $3d+1$ & $2d$ & $2d$ & $2d+1/3$\\
$2$ & $3d$ & $2d+1$ & $2d+2/3$ & $2d+2/3$\\
$3$ & $3d+1$ & $2d+1$ & $2d+8/9$ & $2d+8/9$\\
$4$ & $3d+2$ & $2d+1$ & $2d+1$ & $2d+7/6$\\
$5$ & $3d+3$ & $2d+1$ & $2d+1$ & $2d+3/2$\\
$6$ & $3d+2$ & $2d+2$ & $2d+16/9$ & $2d+16/9$\\
\bottomrule
\end{tabular}
\caption{Optimal parameters for~\Cref{alg:four-block-overview}, where $d=\lfloor k/7\rfloor$ and $j=k\bmod 7$.}
\label{tab:params}
\end{table}

This completes the~proof of~\Cref{thm:four-block}.

\section{Application to~Pigeonhole Modular Equal Subset Sum}
\label{sec:pigeonhole-modular-equal-subset-sum}

\begin{definition}
	An~instance of~\problemname{Pigeonhole Modular Equal Subset Sum} consists of~a~list $A$ of~$n$ positive integers and a~modulus $q\leq2^n-1$.
	The~task is~to~find two distinct subsets $S_1,S_2\subseteq[n]$ such that
		\[\sum_{i\in S_1}A[i]\equiv\sum_{i\in S_2}A[i]\bmod q.\]
\end{definition}

\begin{theorem}
\label{thm:pigeonhole-modular-equal-sums}
	The~\problemname{Pigeonhole Modular Equal Subset Sum} problem can be~solved in~quantum time~$O^*(3^{2n/7})$.
\end{theorem}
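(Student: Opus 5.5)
The plan is to reduce to $7$-\problemabbr{SUM} and apply \Cref{thm:four-block} with $k=7$. There $d=1$ and $j=0$, so $\Psi_7=2$, and lists of length $N=3^{n/7}$ give time $\widetilde O(N^2)=O^*(3^{2n/7})\approx O^*(2^{0.453n})$.

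\textbf{Reformulation.} Two distinct subsets $S_1,S_2$ with equal sums modulo $q$ correspond exactly to a nonzero vector $c\in\{-1,0,1\}^n$ with $\sum_i c_iA[i]\equiv0\bmod q$. Given $S_1,S_2$, put $c_i=[i\in S_1]-[i\in S_2]$. Conversely, given $c$, put $S_1=\{i:c_i=1\}$ and $S_2=\{i:c_i=-1\}$; these are distinct because $c\neq0$. Such a $c$ exists by the pigeonhole principle, since there are $2^n>q$ subsets. We therefore only need to find one.

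\textbf{Excluding the trivial solution.} The all-zero vector always satisfies the congruence, and this is the main point of care. Since $c$ and $-c$ are both solutions, we may assume the first nonzero coordinate of $c$ equals $+1$. We guess its position $i_0\in[n]$, which costs only $n$ repetitions, a polynomial factor. In the instance for $i_0$ we fix $c_j=0$ for $j<i_0$ and $c_{i_0}=1$. The free coordinates are then $j>i_0$, and every solution found in this instance is nonzero.

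\textbf{Building the $7$-SUM instance.} Partition the free coordinates into $7$ blocks of size at most $\lceil n/7\rceil$. For each block, list all values $\bigl(\sum_{j\in\text{block}}c_jA[j]\bigr)\bmod q\in[0,q)$ over $c\in\{-1,0,1\}^{\text{block}}$. Add the constant $A[i_0]\bmod q$ into every entry of the first list and then reduce modulo $q$ again, so that all entries still lie in $[0,q)$. Each list has at most $3^{\lceil n/7\rceil}=O^*(3^{n/7})$ entries; pad the shorter lists with duplicate entries so all have a common length $N$. Because the seven chosen entries lie in $[0,q)$, their total lies in $[0,7q)$. Hence it is $\equiv0\bmod q$ exactly when it equals $tq$ for some $t\in\{0,\dots,6\}$. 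We run the algorithm of \Cref{thm:four-block} with each of these $7$ targets.

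\textbf{Cost and correctness.} All integers are at most $7\cdot2^n=\operatorname{poly}(N)$, so the fingerprinting and bit-length assumptions of \Cref{sec:k-sum-problem} apply. Each index in a list can be decoded to its $\{-1,0,1\}$ pattern in polynomial time. This allows us to read off $c$, and hence $S_1$ and $S_2$, from any tuple returned. Over all $n$ guesses of $i_0$ and all $7$ targets, we make $7n$ calls to the $7$-\problemabbr{SUM} algorithm. Each call has time $\widetilde O(N^{\Psi_7})=O^*(3^{2n/7})$, which proves the claim.
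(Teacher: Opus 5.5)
Your proposal is correct and follows essentially the paper's route: encode solutions as nonzero vectors in $\{-1,0,1\}^n$, build seven ternary block lists, force one coordinate to be nonzero so that the zero vector is excluded, and apply \Cref{thm:four-block} with $k=7$ to polynomially many $7$-SUM instances whose targets are multiples of~$q$. The differences are cosmetic. You reduce each block sum modulo~$q$, so only the seven targets $tq$ with $0\le t\le 6$ are needed, whereas the paper keeps exact sums and tries every multiple of~$q$ of absolute value below $nq$. You also use the symmetry $c\mapsto -c$ to fix the first nonzero coordinate to $+1$, instead of trying both signs at every position. This saves polynomial factors but leaves the $O^*(3^{2n/7})$ bound unchanged.
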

\begin{proof}
	Replacing every $A[i]$ by~$A[i]\bmod q$ does not change the set of~solutions, so~we~assume that $0\leq A[i]<q$ for all~$i\in[n]$.

	A~solution is~equivalent to~a~nonzero vector $\varepsilon\in\{-1,0,1\}^n$ satisfying
	\[\sum_{i \in  [n]}\varepsilon[i]\,A[i]\equiv0\bmod q.\]
	As~$0\leq A[i]<q$ for all~$i$, the~left-hand side lies in~the~interval $(-nq,nq)$, so~the~congruence holds if~and only~if
	\[\sum_{i \in [n]}\varepsilon[i]\,A[i]=cq\quad\text{for some integer }c\text{ with }|c|<n.\]

	Partition $[n]$ into seven blocks $B_1,\ldots,B_7$, each of~size at~most $\lceil n/7\rceil$.
	For each $j\in[7]$, construct a~list $L_j$ indexed by~vectors $\varepsilon\in\{-1,0,1\}^{B_j}$, with $L_j[\varepsilon]=\sum_{i\in B_j}\varepsilon[i]\,A[i]$.
	Each list has size at~most $M=3^{\lceil n/7\rceil}$.
	For a~fixed~$c$, finding a~vector $\varepsilon\in\{-1,0,1\}^n$ with $\sum_{i \in [n]}\varepsilon[i]\,A[i]=cq$ is~exactly a~$7$-\problemabbr{SUM} instance on~the~lists $L_1,\ldots,L_7$ with target~$cq$.

	For $c = 0$, one may get the zero vector as~an~answer, so~we~need to~force at~least one coordinate of~$\varepsilon$ to~be~nonzero.
	For $i\in[n]$, $\sigma\in\{-1,1\}$, and an~integer~$c$ with $|c|<n$, let $j(i)$ denote the index of~the~block containing~$i$, and let $I_{i,\sigma,c}$ be~the~$7$-\problemabbr{SUM} instance with target~$cq$ obtained from $L_1,\ldots,L_7$ by~restricting $L_{j(i)}$ to~the~indices $\varepsilon\in\{-1,0,1\}^{B_{j(i)}}$ with $\varepsilon[i]=\sigma$.
	Every solution of~$I_{i,\sigma,c}$ satisfies $\varepsilon[i]=\sigma$ and is~therefore a~nonzero solution of~the~displayed congruence.
	By~\Cref{thm:four-block} with $k=7$, each of~them is~solved in~time $\widetilde O(M^2)$, so~the~total running time is
	\[O(n^2)\cdot\widetilde O(M^2)=O^*(3^{2n/7}).\qedhere\]
\end{proof}

\section*{Acknowledgments}
The~authors used ChatGPT to~assist with verifying all proofs, identifying typos, and conducting the literature review.
The~authors assume responsibility for all content.

\bibliographystyle{alpha}
\bibliography{refs}

\end{document}